\documentclass[conference]{IEEEtran}

\usepackage{amsfonts}
\usepackage{amssymb}
\usepackage{stfloats}
\usepackage{graphicx}
\usepackage{amsmath}
\usepackage{array}
\usepackage{epstopdf}
\usepackage{amsthm}
\usepackage{algorithm}
\usepackage[noend]{algpseudocode}
\usepackage{amsmath}
\usepackage{graphics}
\usepackage{epsfig}

\usepackage{bm}
\usepackage{upgreek}
\usepackage{caption}
\usepackage{subcaption}
\usepackage{tabularx}
\usepackage{cite}
\usepackage{lipsum,multicol}

\usepackage[usenames]{color}

\newcommand{\hermconj}{^{\mathsf{H}}}

\newcommand{\re}{\operatorname{Re}}

\newcommand{\st}{\operatorname{s.t.}}
\newcommand{\diag}{\operatorname{diag}}

\DeclareMathOperator{\find}{find}
\DeclareMathOperator{\argmin}{argmin}

\usepackage{mathtools}

\DeclarePairedDelimiter\floor{\lfloor}{\rfloor}

\newtheorem{Lemma}{Lemma}

\newtheorem{Remark}{Remark}

\usepackage{geometry}
\geometry{top=0.7in,left=0.68in,right=0.68in,bottom=0.808in}
\setlength{\columnsep}{0.215in}

\IEEEoverridecommandlockouts
\begin{document}

\title{Reconfigurable Intelligent Surface Assisted \\ Edge Machine Learning}

\author{
    \IEEEauthorblockN {Shanfeng Huang\IEEEauthorrefmark{1}\IEEEauthorrefmark{2}, Shuai Wang\IEEEauthorrefmark{1}, Rui Wang\IEEEauthorrefmark{1}, Miaowen Wen\IEEEauthorrefmark{3} and Kaibin Huang\IEEEauthorrefmark{2}}

    \IEEEauthorblockA{
        \IEEEauthorrefmark{1}Department of Electrical and Electronic Engineering, Southern University of Science and Technology\\
        \IEEEauthorrefmark{2}Department of Electrical and Electronic Engineering, The University of Hong Kong\\
        \IEEEauthorrefmark{3}School of Electronic and Information Engineering, South China University of Technology\\
	    Email: \{sfhuang, huangkb\}@eee.hku.hk, \{wangs3,wang.r\}@sustech.edu.cn, eemwwen@scut.edu.cn}

    \thanks{	
        This work was supported in part by the National Natural Science Foundation of China under Grant 62001203, in part by the Shenzhen Fundamental Research Program under Grant JCYJ20190809142403596, and in part by the Fundamental Research Funds for the Central Universities under Grant 2019SJ02.}
}

\maketitle

\begin{abstract}
The ever-growing popularity and rapid improving of artificial intelligence (AI) have raised rethinking on the evolution of wireless networks.  Mobile edge computing (MEC) provides a natural platform for AI applications since it provides rich computation resources to train AI models, as well as low-latency access to the data generated by mobile and Internet of Things devices. In this paper, we present an infrastructure to perform machine learning tasks at an MEC server with the assistance of a reconfigurable intelligent surface (RIS). In contrast to conventional communication systems where the principal criteria are to maximize the throughput, we aim at optimizing the learning performance. Specifically, we minimize the maximum learning error of all  users by jointly optimizing the beamforming vectors of the base station and the phase-shift matrix of the RIS. An alternating optimization-based framework is proposed to optimize the two terms iteratively, where closed-form expressions of the beamforming vectors are derived, and an alternating direction method of multipliers (ADMM)-based algorithm is designed together with an error level searching framework to effectively solve the nonconvex optimization problem of the phase-shift matrix. Simulation results demonstrate significant gains of deploying an RIS and validate the advantages of our proposed algorithms over various benchmarks.
\end{abstract}


\IEEEpeerreviewmaketitle

\section{Introduction}
The prevalence of mobile terminals and rapid growth of Internet of Things (IoT) technology have boosted a wide spectrum of new applications, many of which are computation-intensive and latency-critical, such as image recognition, mobile augmented reality, and edge machine intelligence. 
Mobile edge computing (MEC) is naturally well-suited for the AI-oriented networks, and  the marriage of mobile edge computing (MEC) and AI has given rise to a new research area, called  ``edge intelligence (EI)'' or ``edge AI'' \cite{Zhou2019EI,Li2019EdgeAI,Zhu2020ELsurvey,Yu2020IE}. 
Moreover, to overcome wireless channel hostilities, an emerging paradigm called reconfigurable intelligent surface (RIS) was proposed, aiming at creating a smart radio environment by turning the wireless environment into an optimization variable, which can be controlled and programmed\cite{Renzo2019SRE}. 
Hence, we would like to investigate the design of an RIS-assisted edge learning system. 

In contrast with conventional communication systems where the  general goals are to maximize the throughput, edge ML systems aim at optimizing the learning performance. As a result, the well-known resource allocation schemes that are optimized for conventional systems, such as water-filling and max-min fairness schemes may lead to poor learning performance since they do not take into account the learning-specific factors such as model and data complexities. Recently, there are some outstanding works that aim at optimizing the resource allocation schemes for learning-centric systems. In \cite{Liu2020dataimportance}, the authors proposed a data-importance aware user scheduling scheme for edge ML systems, where data are regarded as having different importance levels based on certain importance measurement. Nevertheless, the analysis is mainly based on SVM. For more general ML models, the importance of training data is hard to quantify. In \cite{Shi2019RISEL}, the authors investigated an RIS-assisted edge inference system. However, the inference tasks are considered as general edge computing tasks in essence, leading to few insights for real ML tasks. More recently, our previous work \cite{Wang2020EdgeLearningTWC}  put forth and validated a nonlinear classification error model for ML tasks, based on which a learning-centric power allocation scheme was proposed and shown to outperform conventional resource allocation schemes significantly with respect to learning error. In this paper, we further extend \cite{Wang2020EdgeLearningTWC} to the scenario where an RIS is deployed to provide intelligence to the wireless channels. With the presence of the RIS, new challenges in the beamforming vector and phase shift optimization arise.

In this paper, we shed light on the design of RIS-assisted edge ML with heterogeneous learning tasks. Specifically, we adopt the nonlinear learning error model in \cite{Wang2020EdgeLearningTWC,johnson2018accuracypilotdata}, and aim at minimizing the maximum learning error of all the learning tasks by jointly optimizing the beamforming vectors at the base station (BS) and the phase shift matrix at the RIS. The optimization problem is nonconvex and involves many optimization variables. To address this challenge, we design an alternating optimization (AO)-based framework to decompose the primal problem and each subproblem is efficiently solved either in closed form or with low-complexity algorithms. 
Specifically, the optimization of beamforming vectors is shown to be equivalent to maximizing the signal-to-interference-plus-noise ratios (SINRs), and closed-form expressions are derived. 
To solve the phase-shift matrix optimization problem, we propose an error level searching (ELS)-based framework to transform the exponential objective into SINR constraints, and exploit alternating direction method of multipliers (ADMM) to decouple the problem to a set of subproblems that can be solved in a distributed manner. Simulations on well-known ML models and public datasets verify the nonlinear learning error model, and demonstrate that our proposed scheme can achieve significantly lower learning error than that of various benchmarks. 



\section{System Model}
We consider an edge ML system as shown in Fig. \ref{fig:SystemModel}, where an intelligent edge server attached to a BS with $N$ antennas is serving $K$ single-antenna users, each with an ML task.
The communication is assisted by an RIS, consisting of $M$ passive reflecting elements which could rotate the phase of the incident signal waves.
In particular, the edge server is designated to train $K$ classification models by collecting data observed at the $K$ mobile users. The classification models can be CNNs, SVMs, etc. 
\begin{figure}[tb]
    \centering
    \includegraphics[scale=0.3]{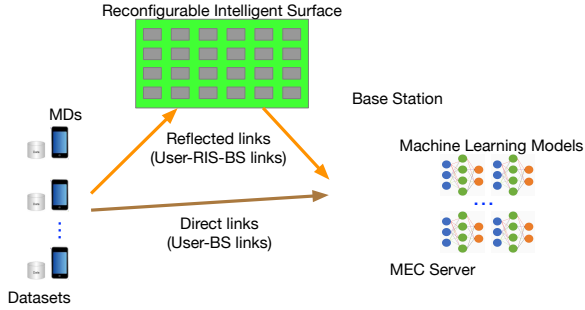}
    \caption{An RIS-assisted edge ML system.}
    \label{fig:SystemModel}
\end{figure}

The training data are transmitted from the mobile users to the edge server via wireless channels which have intrinsic random feature due to multi-path effect and  can suffer from high propagation loss \cite{goldsmith_2005}. To this end, this paper considers an RIS-assisted scheme that can configure the channel intelligently by tuning the phase shifts of the reflecting elements adaptively. With the presence of the RIS, the channel from user $k$ to the BS includes both the direct link (user-BS link) and the reflected link (user-RIS-BS link), where the reflected link consists of the user-RIS link, the phase shifts at RIS, and the RIS-BS link \cite{QWu2019IRSBeamforming}. Denote the channel vector from $k$-th user to the BS as $\mathbf h_k$. It can be expressed as
\begin{align}
    \mathbf h_k=\underbrace{\mathbf h_{\text{d}, k}}_\text{direct link}+\underbrace{\mathbf G\hermconj \mathbf{\Theta}\hermconj \mathbf h_{\text{r},k}}_\text{reflected link},
\end{align}
where $\mathbf h_{\text{d}, k}\in \mathbb C^{N\times 1}$, $\mathbf h_{\text{r},k}\in \mathbb C^{M\times 1}$, and $\mathbf G\in \mathbb C^{M\times N}$ denote the channel vectors and matrix from user $k$ to the BS, from user $k$ to the RIS, and from the RIS to the BS, respectively. Moreover, $\mathbf{\Theta}=\beta \text{diag}(e^{j\varphi_1},\cdots, e^{j\varphi_M})\in \mathbb C^{M\times M}$ denotes the phase-shift matrix of the RIS, where $\beta\in [0, 1]$ is the amplitude reflection coefficient and $\varphi_{m}\in [0,2\pi)$ is the phase shift of the $m$-th reflecting element. Without loss of generality, $\beta$ is typically set to 1. 

Denote the transmitted signal of  user $k \in \{1, 2, \cdots, K\}$ as $x_k$ with power $\mathbb E[|x_k|^2]=p_k$. Accordingly, the received signal $\mathbf y=[y_1, \cdots, y_N]\in \mathbb C^{N\times 1}$ at the BS can be written as 
\begin{align}
    \mathbf y=\sum_{k=1}^K \mathbf h_k x_k+\mathbf n,
\end{align}
where $\mathbf n\sim \mathcal{CN}(\mathbf 0, \sigma^2 \mathbf I_N)$ is the additive white Gaussian noise (AWGN) at the BS. 
A beamforming vector $\mathbf w_k$ with $\mathbf w_k\hermconj \mathbf w_k=1$ is applied for the received signal from each user $k$. Thus, the estimated symbol at the BS for user $k$ is given by
\begin{align}
    \hat y_k=\mathbf w_k\hermconj \mathbf y=\mathbf w_k\hermconj \mathbf h_k x_k+ \sum_{i=1, i\neq k}^K \mathbf w_k\hermconj \mathbf h_{i}x_{i} +\mathbf w_k\hermconj \mathbf n.  
\end{align}
Accordingly, the achievable spectral efficiency of user $k$ in terms of bps/Hz is given by 
\begin{align}
    R_k=\log_2\left(1+\frac{p_k |\mathbf w_k\hermconj \mathbf h_k|^2}{\sum_{i=1, i\neq k}^{K}p_{i}|\mathbf w_k\hermconj \mathbf h_{i}|^2+\sigma^2}\right).
\end{align}

Let $B$ denote the bandwidth of the considered system, and $T$ the total transmission time. Thus, the total number of data samples for user $k$'s task is given by 
\begin{align}
    v_k=\floor*{\frac{BTR_k}{D_k}}\approx  \frac{BTR_k}{D_k},
\end{align}
where $D_k$ is the number of bits for each data sample, and the approximation is due to $\floor*{x}\to x$ when $x\gg 1$.


\section{Problem Formulation}
In contrast with the conventional communication systems where the principal design criterion is usually to maximize the throughput, edge ML systems aim at maximizing the learning performance. Specifically, in the edge ML system considered herein, we aim at minimizing the maximum learning error of all the participating users by jointly optimizing the beamforming vectors $\{\mathbf w_k\}_{k=1}^K$ at the BS, and the phase-shift matrix $\mathbf \Theta$ of the RIS. Thus, we have the following optimization problem.
\begin{subequations}
\begin{align}
    \mathcal P: \min_{\{\mathbf w_k\}_{k=1}^K,\mathbf{\Theta}, \mathbf v} \quad &\max_{k=1,\cdots,K}\quad\Psi_k(v_k) \nonumber\\
    \st \quad\quad 
    &\mathbf w_k\hermconj \mathbf w_k=1, \quad k=1,\cdots,K, \label{eq:bf}\\
    &\frac{BTR_k}{D_k}=v_k, \quad k=1, \cdots, K, \label{eq:v_m}\\
    &0\leq \varphi_m<2\pi,\quad m=1, \cdots, M,
\end{align}
\end{subequations}
where $\Psi_k(v_k)$ is the classification error of learning model $k$ given the sample size $v_k$. In general, the functions $\{\Psi_1, \cdots, \Psi_K\}$ can hardly be expressed analytically. Propitiously, their approximate expressions can be obtained based on the analysis in \cite{Wang2020EdgeLearningTWC, johnson2018accuracypilotdata,BELEITES2013samplesize}. Here, we simply adopt the non-linear model developed in \cite{Wang2020EdgeLearningTWC}, i.e., 
\begin{align}\label{eq:errormodel}
    \Psi_k(v_k)\approx c_k v_k^{-d_k},
\end{align}
where $c_k$ and $d_k$ are tuning parameters which can be obtained by curve fitting.

By substituting (\ref{eq:v_m}) and (\ref{eq:errormodel}) into the objective function, problem $\mathcal P$ is transformed into the following problem.
\begin{subequations}
    \begin{align}
        \mathcal P1: &\min_{\{\mathbf w_k\}_{k=1}^K,\mathbf{\Theta}}\max_{k=1,\cdots,K} c_k\bigg[\frac{BT}{D_k}\log_2\bigg(1 \nonumber\\
        &+\frac{|\mathbf w_k\hermconj(\mathbf h_{\text{d},k}+\mathbf G\hermconj \mathbf{\Theta}\hermconj \mathbf h_{\text{r},k})|^2p_k}{\sum_{i=1, i\neq k}^K |\mathbf w_k\hermconj (\mathbf h_{\text{d},i}+\mathbf G\hermconj \mathbf{\Theta}\hermconj \mathbf h_{\text{r},i})|^2 p_{i}+\sigma^2}\bigg)\bigg]^{-d_k}\nonumber\\
        \st \quad 
        &\mathbf w_k\hermconj \mathbf w_k=1, \quad k=1,\cdots,K, \label{eq:beam}\\
        &|\theta_m|=1,\quad m=1,\cdots,M.
    \end{align}
\end{subequations}

\begin{Remark}[Scaling law with large number of reflecting elements]
To gain some insights on how the number of reflecting elements affect the learning accuracy, we consider the case with a single user and a single-antenna BS, i.e., $K=1$ and $N=1$, and ignore the direct link. Thus, $\mathbf G$ becomes a vector and is denoted by $\mathbf g$.  The receive SNR becomes $p|\mathbf h_{\text{r}}\hermconj \mathbf\Theta\mathbf g|/{\sigma}^2$. Assume $\mathbf\Theta=\mathbf I_M$, $\mathbf h_{\text{r}}\sim\mathcal{CN}(\mathbf 0, \varrho_h^2\mathbf I_M)$, and $\mathbf g\sim\mathcal{CN}(\mathbf 0, \varrho_g^2\mathbf I_M)$. According to the central limit theorem, we have $\mathbf h_{\text{r}}\hermconj\mathbf g\sim\mathcal{CN}(\mathbf 0, M\varrho_h^2\varrho_g^2)$ as $M\to\infty$. Thus, the average receive SNR is $\mathbb E_{\mathbf h_{\text{r}},\mathbf g}p\mathbb|\mathbf h_{\text{r}}\hermconj \mathbf\Theta\mathbf g|/{\sigma}^2= Mp\varrho_h^2\varrho_g^2$. This indicates that the learning error is asymptotically proportional to $(\log_2(M))^{-d}$.
\end{Remark}

\section{Joint Beamforming and Phase-Shifter Design}
Note that problem $\mathcal P1$ is highly nonconvex due to the nonlinear learning error model in the objective function and the unit-modulus constraints. Moreover, the large number of optimization variables make the problem even more untractable. Fortunately, the optimization of the beamforming vectors and the phase-shift matrix can be decomposed. Hence, we adopt an AO-based algorithm to solve $\mathcal P1$ in an iterative manner via alternatively optimizing $\{\mathbf w_k\}_{k=1}^K$ and $\mathbf\Theta$.

\subsection{Beamforming Vectors Optimization}

Note that given $\mathbf\Theta$, the objective function of the original problem $\mathcal P1$ is still nonconvex in $\mathbf w_k$. However, since the objective function is monotonically decreasing in the SINR of each user and is decomposable with respect to $k$, the optimization of $\mathbf w_k$ with fixed $\mathbf \Theta$ can be equivalently solved by maximizing the SINR of each user $k$. Consequently, the optimal beamforming vectors can be obtained by solving the following $K$ subproblems.
\begin{subequations}
    \begin{align}
        \mathcal P_{\mathbf w_k}: \max_{\mathbf w_k} \quad &\frac{|\mathbf w_k\hermconj(\mathbf h_{\text{d},k}+\mathbf G\hermconj \mathbf{\Theta}\hermconj \mathbf h_{\text{r},k})|^2p_k}{\sum_{i=1, i\neq k}^K |\mathbf w_k\hermconj (\mathbf h_{\text{d},i}+\mathbf G\hermconj \mathbf{\Theta}\hermconj \mathbf h_{\text{r},i})|^2 p_{i}+\sigma^2}\nonumber\\
        \st\quad &\mathbf w_k\hermconj \mathbf w_k=1.
    \end{align}
\end{subequations}

Although each problem $\mathcal P_{\mathbf w_k}$ is still nonconvex in $\mathbf w_k$, its optimal solution can be achieved in closed-form as given in the following lemma. 

\begin{Lemma}\label{lem:opt_bf}
    Given $\mathbf\Theta$, the optimal solution of $\mathcal P_{\mathbf w_k}$ for arbitrary $k$ is given in closed-form by 
    \begin{align}\label{eq:optbeam}
        \mathbf w_k^{\diamond}=\frac{\left(\mathbf I_N+\sum_{i=1}^K\frac{p_{i}}{\sigma^2}\mathbf h_{i}\mathbf h_{i}\hermconj\right)^{-1}\mathbf h_k}{\left\|\left(\mathbf I_N+\sum_{i=1}^K\frac{p_{i}}{\sigma^2}\mathbf h_{i}\mathbf h_{i}\hermconj\right)^{-1}\mathbf h_k\right\|_2},
    \end{align}  
    where $\mathbf h_i=\mathbf h_{\text{d},i}+\mathbf G\hermconj \mathbf{\Theta}\hermconj \mathbf h_{\text{r},i}$, for $i=1,\cdots,K$.
\end{Lemma}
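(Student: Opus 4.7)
The plan is to recognize $\mathcal{P}_{\mathbf{w}_k}$ as a generalized Rayleigh quotient and then reconcile the standard MMSE-type maximizer with the particular form in (\ref{eq:optbeam}) via the Sherman--Morrison identity. First, using the constraint $\mathbf{w}_k\hermconj\mathbf{w}_k=1$, I would rewrite $\sigma^2$ in the denominator as $\sigma^2\mathbf{w}_k\hermconj\mathbf{w}_k$, so that the SINR in $\mathcal{P}_{\mathbf{w}_k}$ becomes $\mathrm{SINR}_k(\mathbf{w}_k)=\frac{p_k\,\mathbf{w}_k\hermconj\mathbf{h}_k\mathbf{h}_k\hermconj\mathbf{w}_k}{\mathbf{w}_k\hermconj\mathbf{Q}_k\mathbf{w}_k}$, where $\mathbf{Q}_k := \sigma^2\mathbf{I}_N+\sum_{i\neq k}p_i\,\mathbf{h}_i\mathbf{h}_i\hermconj\succ 0$ is the interference-plus-noise covariance. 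Since the ratio is invariant under positive scaling of $\mathbf{w}_k$, maximizing over $\{\mathbf{w}_k\hermconj\mathbf{w}_k=1\}$ is equivalent to maximizing it over $\mathbb{C}^N\setminus\{\mathbf 0\}$.

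Next, I would perform the change of variables $\mathbf{u} := \mathbf{Q}_k^{1/2}\mathbf{w}_k$, turning the ratio into $p_k\,|\mathbf{u}\hermconj\mathbf{Q}_k^{-1/2}\mathbf{h}_k|^2/\mathbf{u}\hermconj\mathbf{u}$. By Cauchy--Schwarz (equivalently, the generalized eigenvalue interpretation for a rank-one numerator), the maximum is attained uniquely, up to a scalar, at $\mathbf{u}\propto\mathbf{Q}_k^{-1/2}\mathbf{h}_k$, hence at $\mathbf{w}_k\propto\mathbf{Q}_k^{-1}\mathbf{h}_k$. Normalizing to satisfy $\mathbf{w}_k\hermconj\mathbf{w}_k=1$ gives the candidate $\mathbf{w}_k^\star=\mathbf{Q}_k^{-1}\mathbf{h}_k/\|\mathbf{Q}_k^{-1}\mathbf{h}_k\|_2$.

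The remaining task is to show this coincides, as a direction, with the expression in (\ref{eq:optbeam}), which instead inverts $\mathbf{M} := \sigma^2\mathbf{I}_N+\sum_{i=1}^K p_i\,\mathbf{h}_i\mathbf{h}_i\hermconj=\mathbf{Q}_k+p_k\mathbf{h}_k\mathbf{h}_k\hermconj$ (after pulling out the factor $\sigma^2$, which is harmless under normalization). The Sherman--Morrison identity yields
\begin{equation*}
\mathbf{M}^{-1}\mathbf{h}_k \;=\; \mathbf{Q}_k^{-1}\mathbf{h}_k \;-\; \frac{p_k\,\mathbf{Q}_k^{-1}\mathbf{h}_k\mathbf{h}_k\hermconj\mathbf{Q}_k^{-1}\mathbf{h}_k}{1+p_k\,\mathbf{h}_k\hermconj\mathbf{Q}_k^{-1}\mathbf{h}_k} \;=\; \frac{\mathbf{Q}_k^{-1}\mathbf{h}_k}{1+p_k\,\mathbf{h}_k\hermconj\mathbf{Q}_k^{-1}\mathbf{h}_k},
\end{equation*}
so $\mathbf{M}^{-1}\mathbf{h}_k$ and $\mathbf{Q}_k^{-1}\mathbf{h}_k$ differ only by a positive scalar. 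Together with the overall positive factor $\sigma^{-2}$, this scalar cancels upon $\ell_2$-normalization, proving that the right-hand side of (\ref{eq:optbeam}) equals $\mathbf{w}_k^\star$.

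The main obstacle is purely the bookkeeping step of showing that folding the desired-signal term $p_k\mathbf{h}_k\mathbf{h}_k\hermconj$ into the inverted matrix preserves the optimal beamforming direction; the Sherman--Morrison computation above settles that, and the rest is the standard Rayleigh-quotient argument.
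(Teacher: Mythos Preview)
Your proof is correct and follows the standard Rayleigh-quotient/MMSE derivation that the paper defers to by citing \cite{Bjornson2014MU-BF} in lieu of an explicit proof. The Sherman--Morrison step you include to reconcile the interference-plus-noise inverse $\mathbf{Q}_k^{-1}\mathbf{h}_k$ with the full-covariance form in (\ref{eq:optbeam}) is exactly the bookkeeping the cited reference uses, so your argument is essentially the intended one.
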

\begin{proof}
    The proof is similar to that in \cite{Bjornson2014MU-BF} and is neglected here due to page limitation.
\end{proof}

\subsection{Phase-shift Matrix Optimization}
Given the beamforming vectors $\{\mathbf w_k\}_{k=1}^K$, there remain only the unit-modulus constraints of the RIS elements. By exploiting $\mathbf\Theta=\diag(\bm\theta)$ and setting $\mathbf a_{k,i}=\beta \diag(\mathbf h_{\text{r},i}\hermconj)\mathbf G\mathbf w_k$, $b_{k,i}=\mathbf h_{\text{d},i}\hermconj \mathbf w_k$, the optimization of phase-shift matrix $\mathbf\Theta$ can be equivalently written as the following problem.
\begin{subequations}
    \begin{align}
        \!\!\!\mathcal P_{\bm\theta}:\!  &\min_{\bm\theta}\max_{k=1,\cdots,K} \!c_k\!\!\left[\!\!\frac{BT}{D_k}\!\log_2
        \!\!\left(\!\!1\!\!+\!\!\frac{|\bm\theta\hermconj\mathbf a_{k,k}\!+\!b_{k,k}|^2p_k}{\sum\limits_{i=1,i\neq k}^K|\bm\theta\hermconj\mathbf a_{k,i}\!+\!b_{k,i}|^2p_i\!+\!\sigma^2}\!\!\right)\!\!\right]^{-d_k}\nonumber\\
        &\st \quad |\theta_m|= 1, \forall m=1,\cdots,M.
    \end{align}   
\end{subequations}

A common approach to address the nonconvex unit-modulus constraints is semidefinite relaxation (SDR). Nevertheless, even SDR can circumvent the nonconvex unit-modulus constraints, the objective function remains nonconvex due to the nonlinear learning error model. Moreover, the solution achieved by SDR generally does not conform to the rank-1 constraint, and large number of Gaussian randomizations are required to find a rank-1 solution, which increases the complexity dramatically. Besides, SDR lifts the optimization variable from an $M\times 1$ vector to an $M\times M$ matrix.  Thus, SDR cannot scale up the number of RIS elements. To this end, we propose an ELS framework and an ADMM-based algorithm to solve problem $\mathcal P_{\bm\theta}$. Specifically, we first define the error level of the $k$-th ML task for all $k$ as 
\begin{align}
\delta_k\!=\!c_k\left[\frac{BT}{D_k}\log_2\!
\left(\!1\!+\!\frac{|\bm\theta\hermconj\mathbf a_{k,k}+b_{k,k}|^2p_k}{\sum_{i=1,i\neq k}^K|\bm\theta\hermconj\mathbf a_{k,i}\!+\!b_{k,i}|^2p_i\!+\!\sigma^2}\!\right)\!\right]^{-d_k}.
\end{align}
Thus, the maximum error level of all participating tasks is given by $\delta=\max_{k\in\mathcal K} \delta_k$. Then, for a given error level $\delta$, problem $\mathcal P_{\bm\theta}$ can be equivalently transformed to the following feasibility problem.
\begin{subequations}
    \begin{align}
        \mathcal P'_{\bm\theta}:  &\find\quad\bm\theta\\
        &\st\quad\frac{|\bm\theta\hermconj\mathbf a_{k,k}+b_{k,k}|^2p_k}{\sum_{i=1,i\neq k}^K|\bm\theta\hermconj\mathbf a_{k,i}+b_{k,i}|^2p_i+\sigma^2}\geq \gamma_k,\forall k\\
        &\quad\quad\quad |\theta_m|= 1,\quad \forall m,
    \end{align}   
\end{subequations}
where $\gamma_k=2^{\frac{D_k\left(\frac{c_k}{\delta}\right)^{\frac{1}{d_k}}}{BT}}-1$. If problem $\mathcal P_{\bm\theta}'$ is feasible, we can reduce $\delta$; otherwise, we increase $\delta$ to make $\mathcal P_{\bm\theta}'$ feasible, until $\delta$ converges to a certain value. We call this procedure error level searching (ELS).

In the sequel, we design an ADMM-based algorithm to solve problem $\mathcal P_{\bm\theta}'$. By introducing a series of auxiliary variables $\{\mathbf q_k\}_{k=1}^K$ and a new constraint $\mathbf q_1=\mathbf q_2=\cdots=\mathbf q_K=\bm\theta$, problem $\mathcal P_{\bm\theta}'$ can be further rewritten as the following form.
\begin{subequations}
    \begin{align}
        \find &\quad\{\mathbf q_k\}_{k=1}^K, \bm\theta\\
        \st &\frac{|\mathbf q_k\hermconj\mathbf a_{k,k}+b_{k,k}|^2p_k}{\sum_{i=1,i\neq k}^K|\mathbf q_k\hermconj\mathbf a_{k,i}+b_{k,i}|^2p_i+\sigma^2}\geq \gamma_k,\ k=1,\cdots,K\label{eq:snr_constraint}\\
        &|\theta_m|= 1, \quad m=1,\cdots,M\label{eq:norm_constraint}\\
        &\mathbf q_k=\bm\theta, \quad k=1,\cdots,K.
    \end{align}  
    \label{prob:theta_admm} 
\end{subequations}
The augmented Lagrangian (using the scaled dual variable) of problem (\ref{prob:theta_admm}) is given by 
\begin{align*}
    \mathcal L_{\rho}\!(\!\mathbf q_1\!,\!\cdots\!,\!\mathbf q_K\!,\!\bm\theta\!,\!\mathbf u_1\!,\!\cdots\!,\!\mathbf u_K\!)\!\!=\!\!\!\sum_{k=1}^K \!\mathbb I_{\mathcal B_k}\!(\!\mathbf q_k\!)\!\!+\!\! \mathbb I_{\mathcal C}\!(\!\bm\theta\!)\!\!
    +\!\!\rho\!\!\sum_{k=1}^K\!\|\mathbf q_k\!\!-\!\!\bm\theta\!+\!\mathbf u_k\!\|\!^2,
\end{align*}
where $\mathcal B_k$ is the feasibility region of the $k$-th constraint in (\ref{eq:snr_constraint}) and $\mathcal C$ is the feasibility region of constraint (\ref{eq:norm_constraint}), $\rho>0$ is the penalty parameter, and $\mathbf u_k$ is the scaled dual variable. 
Moreover, $\mathbb I$ is the indicator function with $\mathbb I_{\mathcal X}(\mathbf x)=0$ if $\mathbf x\in\mathcal X$ and $+\infty$ otherwise.

The ADMM algorithm iteratively update $\mathbf q_k$, $\bm\theta$ and $\mathbf u_k$ as follows, until a feasible solution is found.
\begin{subequations}\label{eq:admm_update}
    \begin{align}
        &\mathbf q_k^{t+1}:=\!\argmin_{\mathbf q_k} \!\!\mathcal L_{\rho}(\!\mathbf q_1,\!\cdots\!,\!\mathbf q_K,\!\bm\theta^t,\!\mathbf u_1^t,\cdots,\mathbf u_K^t), \forall k\\
        &\bm\theta^{t+1}:=\argmin_{\bm\theta} \mathcal L_{\rho}(\mathbf q_1^{t+1},\cdots,\mathbf q_K^{t+1},\bm\theta,\mathbf u_1^t,\cdots,\mathbf u_K^t)\\
        &\mathbf u_k^{t+1}:=\mathbf u_k^t+\mathbf q_k^{t+1}-\bm\theta^{t+1}, \forall k
    \end{align}
\end{subequations}

In the sequel, we show that each update in (\ref{eq:admm_update}) can be efficiently solved either in closed-form or with low complexity.

1) $\mathbf q_k$ update: The update of $\mathbf q_k$ can be equivalently written as the following problem after removing the irrelevant terms.
\begin{align}\label{eq:q-update}
    \mathbf q_k^{t+1}\!\!=\!\argmin_{\mathbf q_k}\sum_{k=1}^K \!\mathbb I_{\mathcal A_k}(\mathbf q_k)\!+\!\rho\sum_{k=1}^K\|\mathbf q_k\!-\!\bm\theta^t\!+\!\mathbf u_k^t\|^2.
\end{align}
Note that the update of $\mathbf q_k$ can be decoupled into $K$ subproblems for each $k$.
\begin{subequations}
\begin{align}
    \min_{\mathbf q_k} \quad&\|\mathbf q_k-\bm\theta^t+\mathbf u_k^t\|^2\\
    \st \quad&\frac{|\mathbf q_k\hermconj\mathbf a_{k,k}+b_{k,k}|^2p_k}{\sum_{i=1,i\neq k}^K|\mathbf q_k\hermconj\mathbf a_{k,i}+b_{k,i}|^2p_i+\sigma^2}\geq \gamma_k.
\end{align}
\label{prob:q_update}
\end{subequations}

Although problem (\ref{prob:q_update}) is nonconvex in general, strong duality holds and the Lagrangian relaxation produces the optimal solution since there is only one constraint \cite{boyd2004convex}. Thus, we can solve it efficiently using the Lagrangian dual method. Rephrasing problem (\ref{prob:q_update}), it can be equivalently written as the following compact form.
\begin{subequations}
    \begin{align}
        \min_{\mathbf q_k} \quad&\|\mathbf q_k-\bm\zeta_k^t\|^2\\
        \st \quad&\mathbf q_k\hermconj \mathbf A_k\mathbf q_k-2\re\{\mathbf b_k\hermconj\mathbf q_k\}= \tau_k,
    \end{align}
    \label{prob:q_qcqp1}
\end{subequations}
where $\bm\zeta_k^t=\bm\theta^t-\mathbf u_k^t$, $\mathbf A_k=\gamma_k\sum_{i=1,i\neq k}^K\mathbf a_{k,i}\mathbf a_{k,i}\hermconj p_i-\mathbf a_{k,k}\mathbf a_{k,k}\hermconj p_k$, $\mathbf b_k=\mathbf a_{k,k}b_{k,k}^* p_k-\gamma_k\sum_{i=1,i\neq k}^K\mathbf a_{k,i}b_{k,i}^*p_i$, and $\tau_k=|b_{k,k}|^2p_k-\gamma_k\sum_{i=1,i\neq k}^K|b_{k,i}|^2p_i-\gamma_k\sigma^2$. Note that we have changed the constraint to equality to simplify the follow-up derivations. When considering the inequality constraint, we can just check whether $\mathbf q_k=\bm\zeta_k^t$ is feasible. If yes, $\mathbf q_k^*=\bm\zeta_k^t$ is the optimal solution; if not, the optimal solution must satisfy the equality constraint.

For ease of notation, we neglect the subscript $k$ in problem (\ref{prob:q_qcqp1}), and let $\mathbf A=\mathbf Q\mathbf\Lambda\mathbf Q\hermconj$ be the eigenvalue decomposition. Then, problem (\ref{prob:q_qcqp1}) is equivalent to 
\begin{subequations}
    \begin{align}
        \min_{\tilde{\mathbf q}}\quad&\|\tilde{\mathbf q}-\tilde{\bm\zeta}^t\|^2\\
        \st \quad&\tilde{\mathbf q}\hermconj \mathbf\Lambda\tilde{\mathbf q}-2\re\{\tilde{\mathbf b}\hermconj\tilde{\mathbf q}\}= \tau,
    \end{align}
    \label{prob:q_qcqp1_equ}
\end{subequations}
where $\tilde{\mathbf q}=\mathbf Q\hermconj\mathbf q$, $\tilde{\bm\zeta}^t=\mathbf Q\hermconj\bm\zeta^t$, and $\tilde{\mathbf b}=\mathbf Q\hermconj\mathbf b$. 

As a result, the optimal solution can be efficiently found by the following lemma.
\begin{Lemma}\label{lem:opt_q}
    The optimal solution of problem (\ref{prob:q_qcqp1_equ}) is given by 
    \begin{align}\label{eq:q-opt}
        \tilde{\mathbf q}^*=(\mathbf I+\mu\mathbf\Lambda)^{-1}(\tilde{\bm\zeta}+\mu\tilde{\mathbf b}),
    \end{align}
    where $\mu$ is the Lagrangian multiplier of problem (\ref{prob:q_qcqp1_equ}). Moreover, $\mu$ can be found by solving a nonlinear equation $\chi(\mu)=0$ with 
    \begin{align*}
        \chi(\mu)\!=\!\sum_{m=1}^M\!\lambda_m\left|\frac{\tilde{\zeta}_m+\mu\tilde b_m}{1+\mu\lambda_m}\right|^2\!\!-\!2\re\left\{\sum_{m=1}^M\tilde b_m^*\frac{\tilde{\zeta}_m+\mu\tilde b_m}{1+\mu\lambda_m}\right\}\!-\!\tau,
    \end{align*}
    where $\lambda_m$ is the $m$-th diagonal entry of $\mathbf\Lambda$.
\end{Lemma}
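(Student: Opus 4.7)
The plan is to derive (\ref{eq:q-opt}) by applying the KKT conditions to problem (\ref{prob:q_qcqp1_equ}) and then converting the single equality constraint into a scalar equation for the Lagrange multiplier. Although (\ref{prob:q_qcqp1_equ}) is generally nonconvex (the diagonal matrix $\mathbf\Lambda$ may be indefinite), it is a single-constraint QCQP in complex variables, and the strong-duality argument already invoked immediately after (\ref{prob:q_update}) guarantees that the Lagrangian relaxation attains the optimal primal value, so stationarity together with primal feasibility is both necessary and sufficient at an optimum.

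First I would form the partial Lagrangian
$$L(\tilde{\mathbf q},\mu)=\|\tilde{\mathbf q}-\tilde{\bm\zeta}^t\|^2+\mu\bigl(\tilde{\mathbf q}\hermconj\mathbf\Lambda\tilde{\mathbf q}-2\re\{\tilde{\mathbf b}\hermconj\tilde{\mathbf q}\}-\tau\bigr),$$
and take the Wirtinger derivative with respect to $\tilde{\mathbf q}\hermconj$, setting it to zero. This yields the stationarity condition $(\mathbf I+\mu\mathbf\Lambda)\tilde{\mathbf q}=\tilde{\bm\zeta}^t+\mu\tilde{\mathbf b}$. On the admissible set where $1+\mu\lambda_m\neq0$ for all $m$, inversion of $\mathbf I+\mu\mathbf\Lambda$ is well-defined, and since $\mathbf\Lambda$ is diagonal the inverse acts entry-wise, producing exactly the claimed closed form (\ref{eq:q-opt}) with components $\tilde q_m=(\tilde\zeta_m+\mu\tilde b_m)/(1+\mu\lambda_m)$.

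Next I would back-substitute this component-wise expression into the equality constraint $\tilde{\mathbf q}\hermconj\mathbf\Lambda\tilde{\mathbf q}-2\re\{\tilde{\mathbf b}\hermconj\tilde{\mathbf q}\}=\tau$. Writing each term out coordinate by coordinate collapses the constraint to the single scalar condition $\chi(\mu)=0$ with the form stated in the lemma; this step is purely algebraic.

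The main technical obstacle is not deriving $\chi$ but characterizing the correct root of $\chi(\mu)=0$. To turn the KKT stationary point into a global minimizer one must restrict attention to multipliers $\mu$ for which the second-order sufficient condition $\mathbf I+\mu\mathbf\Lambda\succeq\mathbf 0$, equivalently $1+\mu\lambda_{\min}\geq 0$, is satisfied. On this interval a standard trust-region-subproblem argument (monotonicity of $\chi$ between consecutive poles $-1/\lambda_m$) shows that the admissible root is unique and can be located efficiently by bisection or Newton's method, completing the proof.
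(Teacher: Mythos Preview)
Your proposal is correct and follows essentially the same route as the paper: form the Lagrangian, invoke strong duality for single-constraint QCQPs, set the Wirtinger gradient to zero to obtain (\ref{eq:q-opt}), and substitute back into the equality constraint to derive $\chi(\mu)=0$. Your discussion of the second-order condition $\mathbf I+\mu\mathbf\Lambda\succeq\mathbf 0$ and the resulting monotonicity of $\chi$ in fact anticipates the argument the paper places immediately after the lemma rather than in its proof.
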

\begin{proof}
    Please refer to Appendix A.
\end{proof}

Taking derivative on $\chi(\mu)$ with respect to $\mu$, we have 
\begin{align}
    \chi'(\mu)=-2\sum_{m=1}^M\frac{|\tilde b_m-\lambda_m\tilde\zeta_m|^2}{(1+\mu\lambda_m)^3}.
\end{align}
Since we assume the feasibility of problem (\ref{prob:q_qcqp1_equ}), there must exist $\mu$ with $\mathbf I+\mu\mathbf\Lambda\succeq 0$, such that value of $\tilde{\mathbf q}$ minimizing the Lagrangian also satisfies the equality constraint. Thus, $1+\mu\lambda_m\geq 0, \ m=1,\cdots,M$, and $\chi'(\mu)<0$. Therefore, $\chi(\mu)$ is monotonic in the possible region of the solution, and any local solution is guaranteed to be the unique solution. Moreover, the equation $\chi(\mu)=0$ can be efficiently solved by either bisection search method or Newton's method. 


After obtaining $\tilde{\mathbf q}_k$ from problem (\ref{prob:q_qcqp1_equ}), the optimal $\mathbf q_k$ update is given by 
\begin{align}
    \mathbf q_k^{t+1}=\mathbf Q\tilde{\mathbf q}_k.
\end{align}

2) $\bm\theta$ update: The update of $\bm\theta$ can be obtained by solving the following problem. 
\begin{align}
    \bm\theta^{t+1}&=\argmin_{\bm\theta} \sum_{k=1}^K\|\mathbf q_k^{t+1}-\bm\theta+\mathbf u_k^t\|^2\nonumber\\
    &\st \quad|\theta_m|=1, m=1,\cdots,M.
\end{align}
Thus, the optimal $\bm\theta$ is simply the projection of $\frac{1}{K}\sum_{k=1}^K(\mathbf q_k^{t+1}+\mathbf u_k^t)$ onto the unit-modulus constraints, i.e., 
\begin{align}\label{eq:theta_update}
    \bm\theta^{t+1}=e^{j\angle \frac{1}{K}\sum_{k=1}^K(\mathbf q_k^{t+1}+\mathbf u_k^t)}.
\end{align}

3) $\mathbf u_k$ update: The update of $\mathbf u_k$ is standard dual ascent and is given by
\begin{align}\label{eq:u_update}
    \mathbf u_k^{t+1}=\mathbf u_k^t+\mathbf q_k^{t+1}-\bm\theta^{t+1}.
\end{align}


As a result, the optimal phase-shift matrix can be obtained by jointly exploiting ELS and ADMM.



\subsection{Alternating Optimization Framework}
We summarize the proposed alternating optimization algorithm here. Specifically, the AO algorithm is first initialized by $\mathbf w_k^0$ and $\bm\theta^0$. Then, given fixed $\mathbf w_k^t$ and $\bm\theta^t$ in the $t$-th iteration, $\mathbf w_k^{t+1}$ and $\bm\theta^{t+1}$ in the $(t+1)$-th iteration are updated alternatively. Moreover, the convergence of the AO algorithm is demonstrated in Lemma \ref{lem:AO_converge}.

\begin{Lemma}\label{lem:AO_converge}
    With the AO algorithm, the objective value of $\mathcal P1$ is non-increasing in the consecutive iterations.
\end{Lemma}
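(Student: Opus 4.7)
The plan is to exploit the two-stage structure of the AO algorithm and argue monotonicity separately in the beamforming update and in the phase-shift update, chaining the two inequalities together. Let $J(\{\mathbf w_k\},\bm\theta)$ denote the objective of $\mathcal P1$, i.e., $\max_k \Psi_k(v_k)$ evaluated at $(\{\mathbf w_k\},\bm\theta)$, and let $J^t:=J(\{\mathbf w_k^t\},\bm\theta^t)$.

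First I would handle the beamforming update. Fix $\bm\theta^t$. Because $d_k>0$ and $\log_2$ is increasing, the per-user term $\Psi_k(v_k)$ is a strictly decreasing function of $\text{SINR}_k$. Moreover, with $\bm\theta^t$ fixed, $\text{SINR}_k$ depends only on $\mathbf w_k$ (not on $\mathbf w_i$ for $i\neq k$), so minimizing the max of the $\Psi_k$'s decouples into $K$ independent SINR-maximization problems $\mathcal P_{\mathbf w_k}$. Lemma~\ref{lem:opt_bf} gives the exact maximizer $\mathbf w_k^{t+1}$, hence $\text{SINR}_k(\mathbf w_k^{t+1},\bm\theta^t)\geq \text{SINR}_k(\mathbf w_k^t,\bm\theta^t)$ and, by monotonicity, $\Psi_k(v_k^{t+1})\leq \Psi_k(v_k^t)$ for every $k$. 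Taking the maximum over $k$ yields
\begin{align*}
J(\{\mathbf w_k^{t+1}\},\bm\theta^t)\leq J^t.
\end{align*}

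Next I would handle the phase-shift update. Fix $\{\mathbf w_k^{t+1}\}$ and set $\widetilde\delta := J(\{\mathbf w_k^{t+1}\},\bm\theta^t)$. The crucial observation is that $\bm\theta^t$ itself is a unit-modulus vector achieving per-user SINRs that satisfy the constraints of $\mathcal P'_{\bm\theta}$ with threshold $\widetilde\delta$; equivalently, $\bm\theta^t$ is a feasibility certificate at error level $\widetilde\delta$. Since the ELS loop searches downward from (at least) this attainable error level and only accepts a candidate $\bm\theta^{t+1}$ when the ADMM inner loop returns a feasible unit-modulus vector, the level $\delta^{t+1}$ ultimately attached to $\bm\theta^{t+1}$ satisfies $\delta^{t+1}\leq \widetilde\delta$. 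Chaining with the first stage gives $J^{t+1}=\delta^{t+1}\leq \widetilde\delta\leq J^t$, which is the claim.

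The main obstacle is the phase-shift step, since ADMM only addresses a nonconvex feasibility problem and is not a priori guaranteed to converge to a global optimum of $\mathcal P_{\bm\theta}$. To make the monotonicity argument airtight I would warm-start the ELS/ADMM routine from $\bm\theta^t$ (which makes the very first ELS trial at $\delta=\widetilde\delta$ already feasible) and add a simple safeguard: if at some iteration ADMM fails to certify feasibility at a level $\delta<\widetilde\delta$, the outer loop retains $\bm\theta^t$ as the output. With this safeguard, non-increase of $J^t$ follows unconditionally. A secondary, mostly bookkeeping, obstacle is reconciling the ADMM feasibility tolerance with strict inequality of the SINR constraints; this can be absorbed by choosing the ELS step size larger than the feasibility tolerance so that an accepted $\bm\theta^{t+1}$ truly realizes an error level no worse than $\widetilde\delta$.
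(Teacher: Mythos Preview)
Your argument follows the same two-stage AO monotonicity scheme as the paper's proof, which simply chains $g(\mathbf w^t,\bm\theta^{t+1})=\min_{\bm\theta}g(\mathbf w^t,\bm\theta)\leq g(\mathbf w^t,\bm\theta^t)$ and $g(\mathbf w^{t+1},\bm\theta^{t+1})=\min_{\mathbf w}g(\mathbf w,\bm\theta^{t+1})\leq g(\mathbf w^t,\bm\theta^{t+1})$. Your handling of the phase-shift step is in fact more careful than the paper's: the paper tacitly writes $\min_{\bm\theta}$ as if the ELS/ADMM routine attains the global minimum of the nonconvex $\bm\theta$-subproblem, whereas you correctly flag that this is not guaranteed and supply the warm-start/safeguard that makes the non-increase claim hold unconditionally.
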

\begin{proof}
    Please refer to Appendix B.
\end{proof}

\section{Simulation Results}
In this section, we evaluate the performance of our proposed algorithms via simulations. We consider 4 users each with a learning task. The 4 learning tasks considered herein are SVM, CNN with MNIST dataset, CNN with Fashion-MNIST dataset and PointNet. The number of BS antennas varies from 10 to 50, and the number of reflecting elements of the RIS is set to 50. The total transmission time $T=10$ s, bandwidth $B=5$ MHz, and noise power $\sigma^2=-77$ dBm. All the channels involved are assumed to be Rayleigh fading, and the channel coefficients (i.e., the elements in $\mathbf G$, $\mathbf h_{\text{d},k}$, and $\mathbf h_{\text{r},k}$, for all $k$) are normalized with zero mean and unit variance \cite{Guo2020sumrateRIS}. The pathloss exponent of the direct link, i.e., from BS to the users is 4 and the pathloss exponents of BS-RIS link and RIS-user link are set to 2.2.



\subsection{Parameter Fitting for the Learning Tasks}
In this part, the parameters $c_k$'s and $d_k$'s  in the nonlinear learning error models for the $K$ learning tasks are acquired by least mean square (LMS) fitting. Specifically, the SVM classifier is trained on the digits dataset in the Python Scikit-learn ML toolbox. The dataset contains 1797 images of size $8\times 8$ from 10 classes, with 5 bits (representing integers $0\sim16$) for each pixel. Thus, each images needs $D_k=8\times 8\times 5+4=324$ bits. We train the SVM classifier using the first 1000 image samples with sizes $30,50,100,200,300,500,1000$, and use the last 797 image samples for testing. We record the corresponding test errors with different training sample sizes. After that, LMS fitting is applied to obtain $(c_k,d_k)$ for the SVM classifier. Then, we consider a 6-layer CNN with MNIST and Fashion-MNIST datasets, respectively. The CNN consists of a $5\times 5$ convolution layer (with ReLu activation, 32 channels), a $2\times 2$ max pooling layer, another $5 \times 5$ convolution layer (with ReLu activation, 64 channels), a $2\times 2$ max pooling layer, a fully connected layer with 128 units (with ReLu activation), and a final softmax output layer (with 10 outputs). For the MNIST dataset, it consists of 70000 grayscale images (a training set of 60000 examples and a test set of 10000 examples) of handwritten digits, each with $28\times 28$ pixels. Thus, each image needs $D_k=28\times 28\times 8+4=6276$ bits. Each image sample of Fashion-MNIST dataset also needs $D_k=6276$ bits. We train the CNN classifier with sample sizes $100,150,200,300,500,1000,	3000,5000,7000,10000$ for both MNIST and Fashion-MNIST datasets, and record the test errors corresponding to the different training sample sizes. Then, similar LMS fitting is exploited to obtain $(c_k,d_k)$ for these two learning tasks. We also consider PointNet \cite{PointNet} as another learning task to classify 3D point clouds dataset ModelNet40 that contains 12311 CAD models from 40 object categories and  splits into 9843 for training and 2468 for testing. Each data sample has 2000 points with three single-precision floating-point coordinates (4 Bytes). Thus, the data size per sample is $D_k=(2000\times3\times 4+1)\times 8 = 192008$ bits. Similarly, we train the PointNet with sample sizes $1000,3000,5000,7000,9843$, and fit the result to the nonlinear learning error model to obtain $(c_k,d_k)$ for PointNet. The resultant fitting parameters $(c_k,d_k)$ of the nonlinear learning error model are (7.07,0.81), (10.79,0.73), (0.82,0.23) and (0.96,0.24) for SVM, MNIST, Fashion-MNIST and PointNet, respectively.

\begin{figure*}[t]
    \begin{multicols}{3}
        \includegraphics[width=0.95\linewidth]{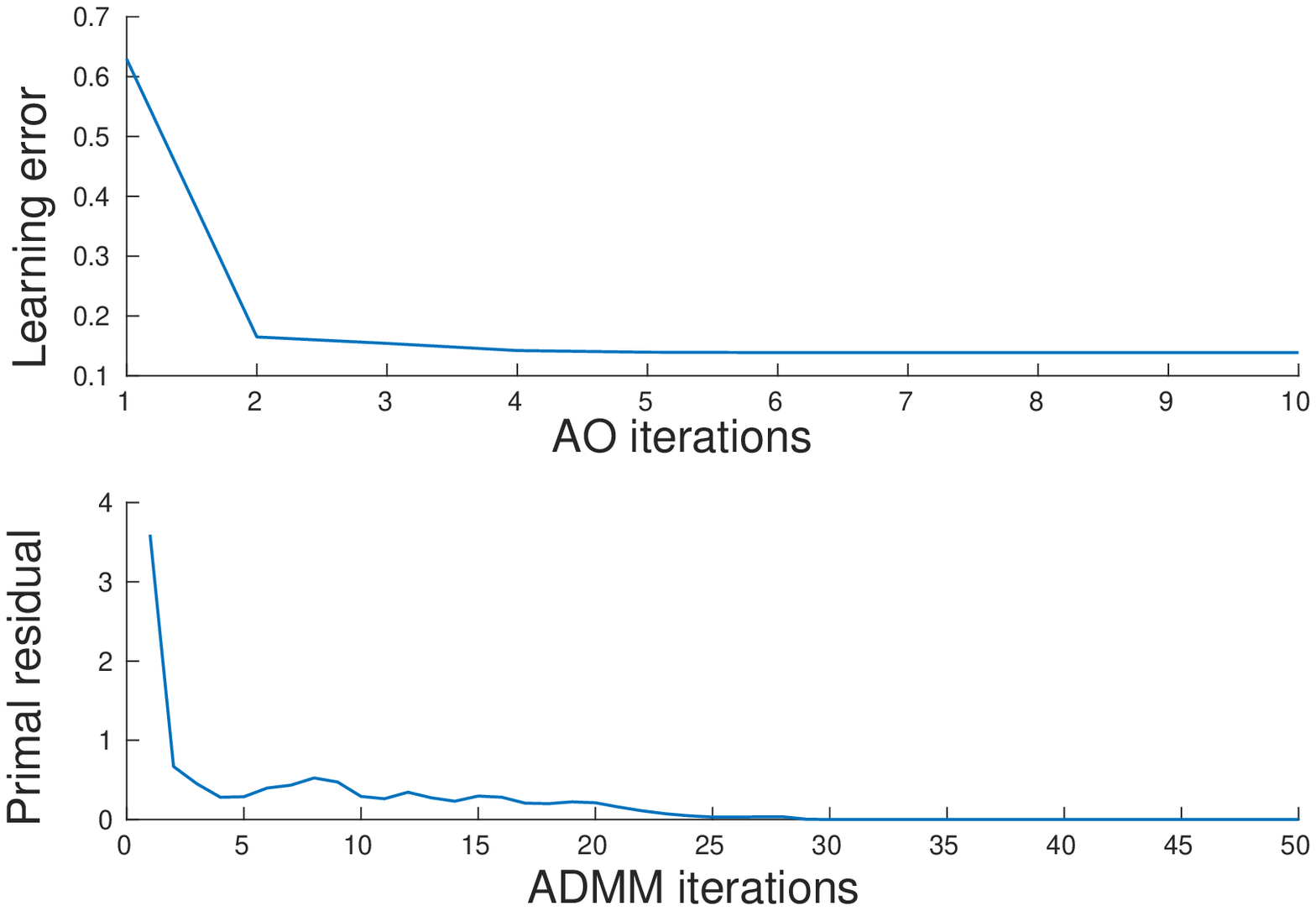}\par\caption{Convergence of the AO and ADMM algorithms.}
        \label{fig:converge}
        \includegraphics[width=0.95\linewidth]{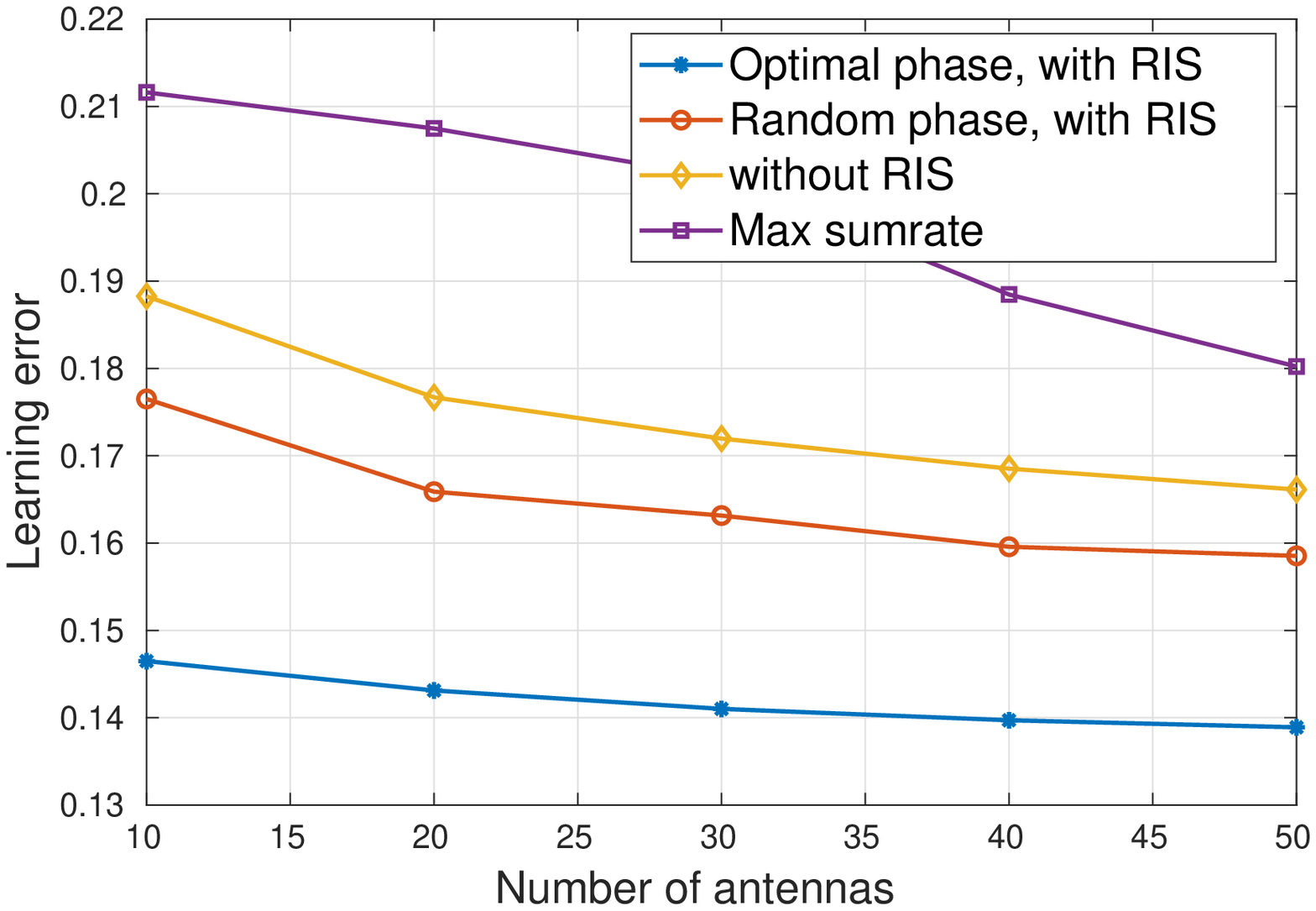}\par\caption{Learning error comparison of various benchmarks.}\label{fig:benchmarks}
        \includegraphics[width=0.95\linewidth]{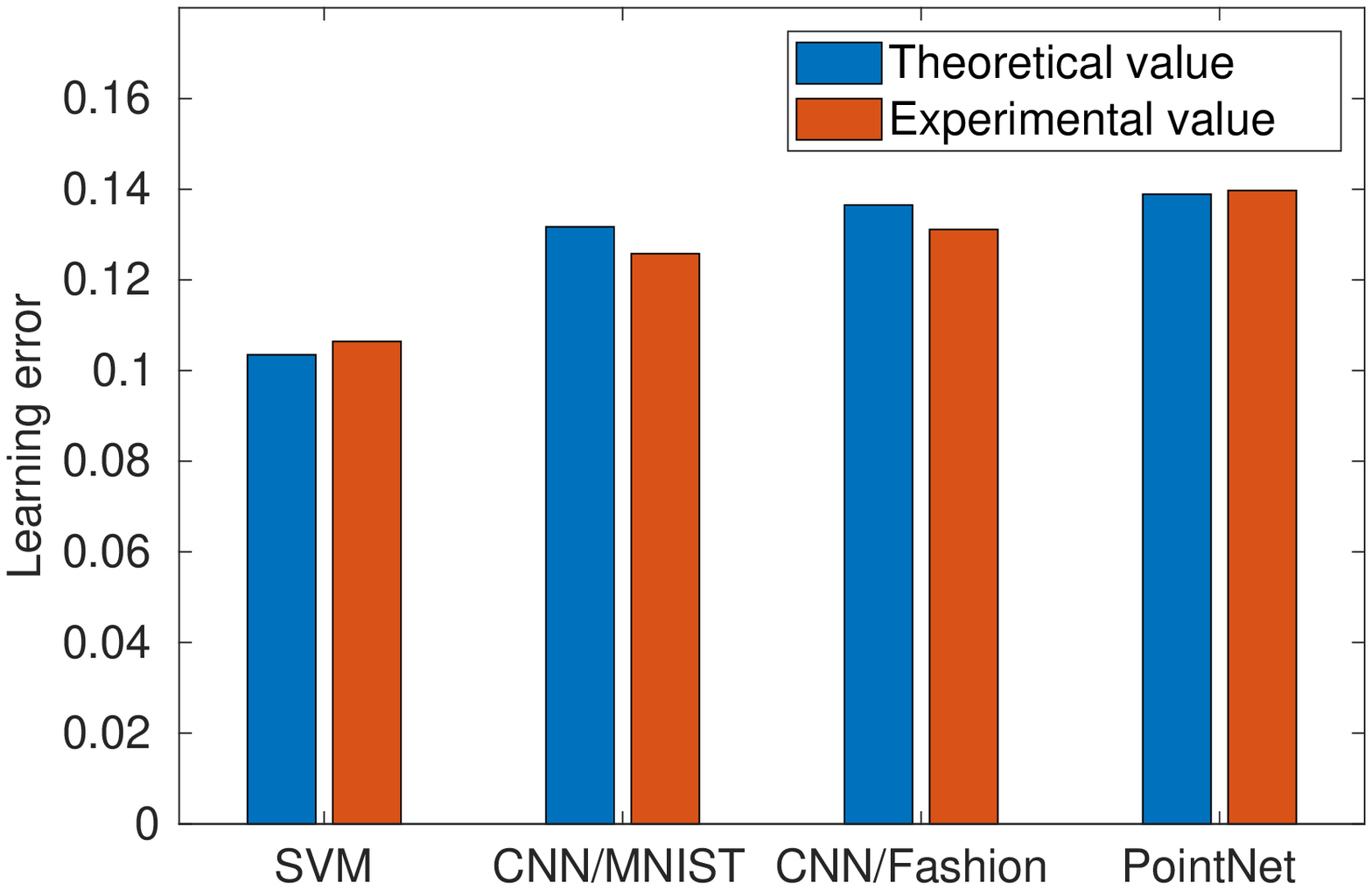}\par\caption{Theoretical learning errors v.s. Experimental learning errors.}\label{fig:experiment}
    \end{multicols}
    \end{figure*}

\subsection{Convergence of AO and ADMM algorithms}
The convergence of the AO algorithm has been proved theoretically and we further show it by simulations here. The top of Fig. \ref{fig:converge} shows that the value of the objective function is non-increasing in the consecutive AO iterations, and converges after around 4 iterations, which is quite efficient.  Moreover, the convergence of the ADMM algorithm is also verified by simulations. It is shown in the bottom of Fig. \ref{fig:converge} that the primal residual concussively degrades and the ADMM algorithm converges after around 30 iterations.

\subsection{Comparison with Various Benchmarks}
We demonstrate the superiority of our RIS-assisted learning-centric scheme with various benchmarks in Fig. \ref{fig:benchmarks}. The three benchmarks considered in this paper are: 1) without deploying the RIS, 2) deploying the RIS with random phase-shift matrix, and 3) maximizing the sumrate as in conventional communication systems. It is shown that the performances of learning-centric schemes are always dramatically better than that of conventional sumrate-maximization scheme, even without the help of the RIS, which demonstrates the necessity of redesign of the wireless communication systems in learning-driven scenarios. Also shown in Fig. \ref{fig:benchmarks} is that with the presence of the RIS, the learning performance can be improved remarkably, justifying the gain of deploying the RIS. Moreover, it can be seen that our proposed phase-shift optimization can further improve the learning accuracy significantly, validating the effectiveness of our proposed optimization algorithms. 

To demonstrate the validity of the nonlinear learning error model, we compare the learning errors obtained from the theoretical error model with those obtained from real experiments. Specifically, we record the optimal number of data samples for each ML task and the corresponding theoretical learning error. Then, we use the optimized sample sizes to train the corresponding learning models, and average the resulting learning errors from 10 runs to obtain the experimental learning errors. Fig. \ref{fig:experiment} shows that the theoretical results conform to the experimental results very well.

\section{Conclusions}
We have investigated the RIS-assisted mobile edge computing systems with learning tasks. The design of a learning-efficient system was achieved by jointly optimizing  the beamforming vectors of the BS and the phase-shift matrix of the RIS in an AO framework. Efficient algorithms were elaborated to address the highly nonconvex optimization problem induced by the nonlinear learning error model and unit-modulus constraints of RIS elements. Experimental results demonstrated the validity of the learning error model and superiority of our proposed scheme over various benchmarks. 

\appendix

\subsection{Proof of Lemma \ref{lem:opt_q}}
Since strong duality holds for QCQP problems with one constraint as proved in \cite{boyd2004convex}, we can solve the dual problem of (\ref{prob:q_qcqp1_equ}). The Lagrangian of (\ref{prob:q_qcqp1_equ}) is 
\begin{align*}
    \mathcal L(\mathbf q,\mu)=\|\tilde{\mathbf q}-\bm\zeta^t\|^2+\mu(\tilde{\mathbf q}\hermconj\mathbf\Lambda\tilde{\mathbf q}-2\re\{\tilde(\mathbf b)\hermconj\tilde{\mathbf q}\}-\tau).
\end{align*}
Setting $\frac{\partial \mathcal L(\mathbf q,\mu)}{\partial \mathbf q}=0$, we obtain the optimal $\tilde{\mathbf q}$ as 
\begin{align*}
    \tilde{\mathbf q}^*=(\mathbf I+\mu\mathbf\Lambda)^{-1}(\tilde{\bm\zeta}+\mu\tilde{\mathbf b}).
\end{align*}
Substituting the above equation back to the equality constraint in (\ref{prob:q_qcqp1_equ}), it becomes a nonlinear equation with respect to $\mu$:
\begin{align*}
    \chi(\mu)\!=\!\sum_{m=1}^M\lambda_m\left|\frac{\tilde{\zeta}_m+\mu\tilde b_m}{1+\mu\lambda_m}\right|^2\!\!-\!2\re\left\{\sum_{m=1}^M\tilde b_m^*\frac{\tilde{\zeta}_m+\mu\tilde b_m}{1+\mu\lambda_m}\right\}\!-\!\tau,
\end{align*}
where $\lambda_m$ is the $m$-th diagonal entry of $\mathbf\Lambda$. 
\subsection{Proof of Lemma \ref{lem:AO_converge}}
For ease of notation, we denote the objective function of $\mathcal P1$ as $g(\mathbf w,\bm\theta)$. Assume $\mathbf w^t$ and $\bm\theta^t$ are obtained by the corresponding optimization problems in the $t$-th iteration, respectively. Then, we have 
\begin{align*}
g(\mathbf w^t,\bm\theta^{t+1})=\min_{\bm\theta} g(\mathbf w^t,\bm\theta)\leq g(\mathbf w^t,\bm\theta^t).
\end{align*}
Analogously, it holds that
\begin{align*}
    g(\mathbf w^{t+1},\bm\theta^{t+1})=\min_{\mathbf w} g(\mathbf w,\bm\theta^{t+1})
    \leq g(\mathbf w^t,\bm\theta^{t+1})
    \leq g(\mathbf w^t,\bm\theta^t).
\end{align*}

\bibliographystyle{IEEEtran}
\bibliography{Reference_IRS.bib}
\end{document}